\documentclass[letterpaper]{article}

\usepackage[T1]{fontenc}
\usepackage{mathpazo}
\usepackage{amsmath, amssymb, amsthm}
\usepackage{booktabs}
\usepackage{xcolor}
\usepackage{graphicx}
\usepackage{enumitem}
\usepackage{geometry}
\usepackage{titlesec}
\usepackage{tikz}
\usepackage{pgfplots}

\definecolor{linkblue}{RGB}{30, 80, 180}
\definecolor{linkgreen}{RGB}{30, 120, 60}
\definecolor{linkred}{HTML}{AE0000}

\usepackage{hyperref}

\pgfplotsset{compat=1.17}

\usetikzlibrary{patterns, arrows.meta, decorations.pathreplacing, calc}

\hypersetup{
  colorlinks=true,
  urlcolor=linkblue,
  linkcolor=linkgreen,
  citecolor=linkgreen,
  filecolor=linkred,
}

\titleformat{\section}{\large\bfseries}{\thesection}{1em}{}
\titleformat{\subsection}{\normalsize\bfseries}{\thesubsection}{1em}{}

\newtheorem{definition}{Definition}
\newtheorem{theorem}{Theorem}
\newtheorem{corollary}{Corollary}
\newtheorem{proposition}{Proposition}

\title{\textbf{Engineered Simultaneity:}\\[0.3em]
{\large The Physical Impossibility of Consolidated Price Discovery\\
Across Spacelike-Separated Exchanges}}

\author{Paul Borrill\\
\small DAEDAELUS\\
\small\texttt{paul@daedaelus.com}\\
\small ORCID: 0000-0002-7493-5189}

\date{Version 3.0 \quad 02026-MAR-22}

\begin{document}

\maketitle

\begin{abstract}
We define \emph{engineered simultaneity}: the construction of a system that
requires temporal comparison of events at spacelike-separated locations,
implements this comparison via an implicit simultaneity convention, and
represents the result as an objective measurement rather than a conventional
choice. We show that the National Best Bid and Offer (NBBO)---the
regulatory cornerstone of U.S. equity markets---is an instance of
engineered simultaneity. The NBBO requires determining ``current'' prices
across exchanges whose spatial separation places their price events
outside each other's light cones. Special relativity proves that the
temporal ordering of such events is frame-dependent: there exist inertial
reference frames in which the NBBO differs from the value reported by the
Securities Information Processor. The impossibility is not approximate; it
is exact and unavoidable within the causal structure of Minkowski
spacetime. General relativity compounds the impossibility: gravitational
time dilation introduces frame-rate discrepancies between exchanges at
different altitudes, and recent work on indefinite causal order in quantum
information theory undermines the premise of a fixed causal structure
altogether. We formalize the special-relativistic argument using the causal
precedence relation, connect it to Lamport's theorem on distributed
ordering, and note that approximately \$5~billion per year in latency
arbitrage profits are extracted from the gap between the NBBO's implicit
simultaneity convention and physical reality.
\end{abstract}

\section{Introduction}

The U.S. equity market operates across 16 exchanges housed in data
centers spanning the New Jersey--Chicago corridor. Regulation NMS, adopted
by the Securities and Exchange Commission in 2005~\cite{regnms2005},
mandates the National Best Bid and Offer (NBBO): a consolidated
quotation representing the best available bid and offer prices across
all exchanges at any given time. The trade-through rule (Rule~611)
requires that trades execute at prices no worse than the NBBO, ensuring
investors receive ``best execution.''

This regulatory architecture rests on a physical assumption: that
``at any given time'' has a well-defined meaning across geographically
separated exchanges. We show that it does not.

Einstein's special theory of relativity~\cite{einstein1905} established
that simultaneity is not an absolute relation. For events separated by a
spacelike interval---events too far apart in space for light to connect
them in the elapsed time---there is no unique answer to the question
``which happened first?'' Different inertial observers, each applying the
laws of physics correctly, assign different temporal orderings to the
same pair of events. This is not a measurement limitation. It is a
structural feature of Minkowski spacetime~\cite{minkowski1908}.

Independently, Lamport~\cite{lamport1978} proved the equivalent result for
distributed systems: without communication, no consistent ordering of
events at different nodes can be established. The ``happened-before''
relation provides the only physically meaningful ordering, and it is a
partial order---it leaves spacelike-separated events unordered.

The NBBO requires a total order. Physics provides only a partial order.
The gap between these two structures is not a technical deficiency to be
engineered away. It is a theorem.

We call the systematic practice of constructing, deploying, and
representing such a total order as if it were a physical measurement
\emph{engineered simultaneity}. The purpose of this paper is to define
the concept precisely, prove its applicability to the NBBO, and
quantify its consequences.

\section{The NBBO as a Simultaneity Claim}
\label{sec:nbbo}

\subsection{Regulatory Definition}

Regulation NMS defines the NBBO as the best bid and best offer
``currently available'' across all ``automated quotations'' of
national securities exchanges~\cite{regnms2005}. The Securities
Information Processor (SIP) receives quotation updates from each
exchange and publishes a consolidated NBBO.

The operative word is ``currently.'' For the NBBO to be well-defined,
there must exist a meaningful notion of ``current'' that applies
simultaneously across all exchanges. This requires comparing prices at
events---quote updates---occurring at different spatial locations.

\subsection{The Consolidation Process}

The SIP consolidation proceeds as follows. Each exchange $E_i$
($i = 1, \ldots, N$) generates a stream of quote updates. Each update
is an event: a price $p_i$ produced at spacetime location $(x_i, t_i)$.
The SIP, located at a fixed position $x_{\mathrm{SIP}}$, receives these
updates after propagation delays $\delta_i \geq |x_i - x_{\mathrm{SIP}}|/c$
and computes:
\begin{equation}
\mathrm{NBBO}(t) = \left(\max_i \{b_i(t)\},\; \min_i \{a_i(t)\}\right)
\label{eq:nbbo}
\end{equation}
where $b_i(t)$ and $a_i(t)$ are the bid and ask prices from exchange
$E_i$ ``at time $t$.''

But $b_i(t)$ is not the bid at exchange $E_i$ at time $t$. It is the
most recent bid that has \emph{arrived at the SIP} by time $t$. The SIP
does not observe the market; it observes its own input buffer. The NBBO
is a function of arrival times at a single location, not of emission
times at multiple locations.

\subsection{Formal Definition}

\begin{definition}[Engineered Simultaneity]
\label{def:es}
A system exhibits \emph{engineered simultaneity} when it satisfies three
conditions:
\begin{enumerate}
\item[(ES1)] It requires temporal comparison of events at
      spacelike-separated locations.
\item[(ES2)] It implements this comparison via an implicit
      simultaneity convention (typically: order of arrival at a
      designated consolidation point).
\item[(ES3)] It represents the result as an objective property of the
      system rather than as a consequence of the chosen convention.
\end{enumerate}
\end{definition}

The NBBO satisfies all three conditions. (ES1): Exchanges are spatially
separated and quote events occur at intervals shorter than light-travel
time. (ES2): The SIP imposes order by arrival time at its own location.
(ES3): Regulation NMS treats the NBBO as the objective ``best price''
for the purpose of trade-through protection.

\section{Causal Structure of the U.S.\ Exchange Network}
\label{sec:causal}

\subsection{Exchange Locations}

U.S.\ equity exchanges operate from five primary data center clusters.
Table~\ref{tab:exchanges} lists their locations and pairwise light-time
separations.

\begin{table}[ht]
\centering
\small
\begin{tabular}{lll}
\toprule
\textbf{Cluster} & \textbf{Location} & \textbf{Exchanges} \\
\midrule
Mahwah, NJ    & 41.08$^\circ$N, 74.16$^\circ$W & NYSE, NYSE Arca, NYSE American \\
Carteret, NJ  & 40.58$^\circ$N, 74.23$^\circ$W & Nasdaq, Nasdaq BX, Nasdaq PSX \\
Secaucus, NJ  & 40.79$^\circ$N, 74.06$^\circ$W & CBOE BZX, BYX, EDGX, EDGA \\
Weehawken, NJ & 40.77$^\circ$N, 74.02$^\circ$W & IEX, MEMX, LTSE \\
Aurora, IL    & 41.76$^\circ$N, 88.29$^\circ$W & CME Group (futures reference) \\
\bottomrule
\end{tabular}
\caption{Primary U.S.\ exchange data center locations.}
\label{tab:exchanges}
\end{table}

\begin{table}[ht]
\centering
\small
\begin{tabular}{lrrr}
\toprule
\textbf{Pair} & \textbf{Distance (km)} & \textbf{$\Delta t_{\min}$ ($\mu$s)} & \textbf{Fiber ($\mu$s)} \\
\midrule
Mahwah--Carteret     &    43 &   143 &   215 \\
Mahwah--Secaucus     &    34 &   113 &   170 \\
Carteret--Secaucus   &    27 &    90 &   135 \\
NJ cluster--Aurora   & 1,\!180 & 3,\!940 & 5,\!900 \\
\bottomrule
\end{tabular}
\caption{Pairwise separations and minimum light-time delays. $\Delta t_{\min} = d/c$ in vacuum; fiber delay assumes $n \approx 1.5$.}
\label{tab:latencies}
\end{table}

\begin{figure}[ht]
\centering
\begin{tikzpicture}[scale=0.82,
  >=Stealth,
  worldline/.style={thick},
  lightcone/.style={thick, dashed, orange!80!black},
  ]


  \draw[->, thick] (-1.5,0) -- (11.5,0) node[right] {Space (km)};
  \draw[->, thick] (0,-1.8) -- (0,9) node[above left] {$ct$ (km)};

  \fill[blue!8] (-0.1,0) rectangle (0.7,8.2);
  \draw[worldline, blue!70!black] (0.3,0) -- (0.3,8.2);
  \node[above right, font=\footnotesize, blue!70!black, xshift=-2pt] at (0.7,8.2)
    {NJ cluster};
  \node[below right, font=\tiny, blue!50!black, xshift=-2pt] at (0.7,8.2)
    {NYSE, NASDAQ, CBOE};

  \draw[worldline, red!70!black] (9.0,0) -- (9.0,8.2);
  \node[above, font=\footnotesize, red!70!black] at (9.0,8.2) {CME};

  \draw[decorate, decoration={brace, amplitude=4pt, mirror}]
    (-0.1,-0.4) -- (0.7,-0.4)
    node[midway, below=5pt, font=\scriptsize] {$\sim\!43$\,km};

  \draw[<->, thin] (0.7,-1.2) -- (9.0,-1.2)
    node[midway, below, font=\footnotesize] {1,180\,km};

  \fill[blue!70!black] (0.3,1.5) circle (3pt);
  \node[left, font=\small, blue!70!black] at (-0.2,1.5) {$\alpha$};

  \begin{scope}
    \clip (-1.4,-0.1) rectangle (11.2,8.3);
    \draw[lightcone] (0.3,1.5) -- (7.0,8.2);     
    \draw[lightcone] (0.3,1.5) -- (-1.2,3.0);    
    \draw[lightcone] (9.0,4.0) -- (0.8,12.2);    
    \draw[lightcone] (9.0,4.0) -- (11.2,6.2);    
  \end{scope}

  \fill[red!70!black] (9.0,4.0) circle (3pt);
  \node[right, font=\small, red!70!black] at (9.2,4.0) {$\beta$};

  \begin{scope}
    \clip (-1.4,0) rectangle (11.2,8.2);
    \fill[red!12, opacity=0.5]
      (0.3,1.5) -- (5.9,7.1) -- (9.0,4.0) -- (4.35,-1.35) -- cycle;
  \end{scope}

  \node[font=\footnotesize, text=red!60!black, align=center,
        fill=white, fill opacity=0.7, text opacity=1, inner sep=2pt,
        rounded corners=1pt]
    at (6.5,5.8)
    {Spacelike region\\[-1pt]{\scriptsize ordering is frame-dependent}};

  \draw[<->, thin, green!50!black] (0.5,2.5) -- (8.8,2.5)
    node[midway, above, font=\footnotesize, fill=white, inner sep=2pt]
    {$\Delta t_{\min} = 3{,}940\;\mu$s};

  \fill[black!60!green] (0.3,5.5) circle (2.5pt);
  \node[left, font=\footnotesize, black!60!green] at (-0.2,5.5)
    {SIP receives $\alpha$};

  \fill[black!60!green] (0.3,7.2) circle (2.5pt);
  \node[left, font=\footnotesize, black!60!green] at (-0.2,7.2)
    {SIP receives $\beta$};

  \draw[->, thick, green!50!black, dashed] (0.3,7.2) -- (2.0,7.2);
  \node[above right, font=\footnotesize, green!50!black, yshift=-1pt] at (1.0,7.2)
    {NBBO computed};

\end{tikzpicture}
\caption{Minkowski diagram of the U.S.\ exchange network. Events
$\alpha$ (NYSE quote) and $\beta$ (CME quote) are
spacelike-separated: no light signal from $\alpha$ can reach $\beta$
before $\beta$ occurs. Their temporal ordering is frame-dependent.
The SIP, located near the NJ cluster, receives both events and
computes an NBBO---but this NBBO reflects arrival order at the SIP,
not any frame-independent temporal ordering.}
\label{fig:minkowski}
\end{figure}
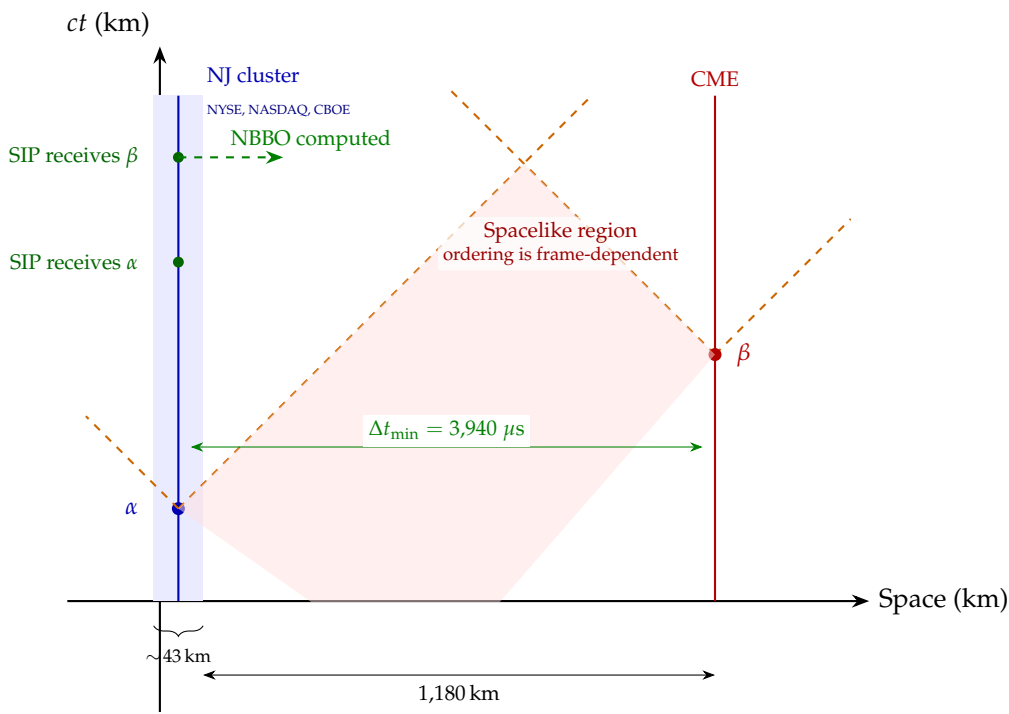

\subsection{Minkowski Analysis}

Consider two exchanges $E_A$ at position $x_A$ and $E_B$ at position
$x_B$, with $d = |x_A - x_B|$. A quote event $\alpha$ at $E_A$
occurring at coordinate time $t_\alpha$ and a quote event $\beta$ at
$E_B$ at $t_\beta$ are spacelike-separated if and only if:
\begin{equation}
|t_\alpha - t_\beta| < \frac{d}{c}
\label{eq:spacelike}
\end{equation}
For spacelike-separated events, the Lorentz transformation yields:
\begin{equation}
t'_\alpha - t'_\beta = \gamma\left[(t_\alpha - t_\beta) - \frac{v}{c^2}(x_\alpha - x_\beta)\right]
\label{eq:lorentz}
\end{equation}
where $v$ is the velocity of the primed frame and
$\gamma = (1 - v^2/c^2)^{-1/2}$. When events are spacelike-separated,
there exist frames in which $t'_\alpha > t'_\beta$ and frames in which
$t'_\alpha < t'_\beta$. The sign of the temporal interval---which event
is ``first''---depends on the observer.

\begin{theorem}[Frame-Dependence of the NBBO]
\label{thm:main}
Let $E_A$ and $E_B$ be exchanges separated by distance $d > 0$. Let
$\alpha$ and $\beta$ be quote events at $E_A$ and $E_B$ respectively,
with $|t_\alpha - t_\beta| < d/c$. Then there exist inertial frames
$S$ and $S'$ such that the NBBO computed from $\{\alpha, \beta\}$
differs between $S$ and $S'$.
\end{theorem}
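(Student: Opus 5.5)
The plan is to reduce the statement to a sign-flip of $t'_\alpha - t'_\beta$ under a suitable Lorentz boost, using Eq.~\eqref{eq:lorentz}, and then to exhibit quote values for which the NBBO depends on that sign.

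\textbf{Step 1: making ``the NBBO computed from $\{\alpha,\beta\}$ in frame $S$'' precise.} The NBBO in Eq.~\eqref{eq:nbbo} is a function of which quotes are ``current.'' I will formalize this as follows. Each exchange has a standing quote before its event, and the event replaces it. The frame-$S$ NBBO at an instant of $S$-time $\tau$ with $\min(t_\alpha,t_\beta)\le\tau<\max(t_\alpha,t_\beta)$ uses the updated quote at the exchange whose event came first, and the old quote at the other. Equivalently, the relevant object is the intermediate NBBO state determined by the order of the two events. The theorem then reduces to two claims: (i) the order of $\alpha,\beta$ differs between two frames, and (ii) the intermediate state depends on that order for suitable quote values. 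I expect this modeling choice to be the main obstacle, since the statement is informal. I will state it explicitly as the interpretation of ``currently'' from the regulatory definition, namely the sequence of consolidated states induced by the frame's time ordering.

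\textbf{Step 2: the sign flip.} Set $\Delta t = t_\alpha - t_\beta$ and $\Delta x = x_\alpha - x_\beta$, with $|\Delta x| = d$ and $|\Delta t| < d/c$. By Eq.~\eqref{eq:lorentz}, $t'_\alpha - t'_\beta = \gamma(\Delta t - v\Delta x/c^2)$. Its sign is the sign of $\Delta t - v\Delta x/c^2$, which is linear in $v$. At $v = \pm v_0$ with $v_0 = c$, the value is $\Delta t \mp c\,\Delta x/c^2 = \Delta t \mp \Delta x/c$. Because $|\Delta t| < |\Delta x|/c$, these two endpoint values have opposite signs. By strict inequality and continuity there is some $v_0 < c$ for which the signs at $\pm v_0$ are still opposite. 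Explicitly, any $v_0$ with $c^2|\Delta t|/d < v_0 < c$ works. Take $S$ to be the boost with velocity $v_0\,\mathrm{sgn}(\Delta x)$ and $S'$ the boost with the opposite velocity. Then $\alpha$ precedes $\beta$ in one frame and follows it in the other. If $\Delta t = 0$, any $v_0 \in (0,c)$ works.

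\textbf{Step 3: choosing quotes that make the order matter.} Let $E_A$ initially show bid $b_A^0$ and $E_B$ show bid $b_B^0$. Event $\alpha$ raises $A$'s bid to $b_A^1$, and $\beta$ lowers $B$'s bid to $b_B^1$, with $b_A^0 < b_B^1 < b_B^0$ and $b_A^1 < b_B^0$. In the frame where $\alpha$ occurs first, the intermediate best bid is $\max(b_A^1, b_B^0) = b_B^0$ at $E_B$. In the frame where $\beta$ occurs first, it is $\max(b_A^0, b_B^1) = b_B^1$. Since $b_B^1 \ne b_B^0$, the consolidated best bid, and hence the NBBO, differs between $S$ and $S'$. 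The same construction could be arranged so that even the identity of the best-bid exchange changes, but the price difference already suffices.

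The argument closes by noting that both $S$ and $S'$ are legitimate inertial frames, since $|v_0|<c$. It follows that no frame-independent fact selects either intermediate NBBO.
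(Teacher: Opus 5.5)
Your proposal is correct and follows the same route as the paper. In both, a boost reverses the order of $\alpha$ and $\beta$, and for suitably chosen quotes the NBBO on the slices between the two events depends on that order. You are more explicit than the paper, which only appeals to Eq.~\eqref{eq:lorentz} and treats $b_B$ as simply absent before $\beta$: you give the admissible range $c^2|\Delta t|/d < v_0 < c$ and you model standing quotes. The argument itself is the same.
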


\begin{proof}
Suppose $\alpha$ reports bid $b_A$ and $\beta$ reports bid $b_B > b_A$.
In frame $S$, where $\alpha$ occurs before $\beta$, the NBBO at any
time after both events reflects $b_B$. In frame $S'$, where $\beta$
occurs before $\alpha$, the sequence of NBBO updates differs: $b_B$
enters the NBBO first, followed by no change (since $b_A < b_B$). More
critically, at times between the two events in each frame, the NBBO
disagrees: frame $S$ reports $b_A$ as best bid while $b_B$ has not yet
occurred; frame $S'$ already reports $b_B$.

Since the NBBO is a function of which quote events are deemed
``current,'' and ``current'' is frame-dependent for spacelike-separated
events, the NBBO is frame-dependent. \qed
\end{proof}

\begin{corollary}
\label{cor:convention}
The NBBO reported by the SIP is not a measurement of the market state.
It is an artifact of a particular simultaneity convention: the
convention that ``current'' means ``received at $x_{\mathrm{SIP}}$ by
coordinate time $t$.'' Different conventions yield different NBBOs, and
no convention is privileged by physics.
\end{corollary}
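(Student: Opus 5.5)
The statement to prove is Corollary~\ref{cor:convention}. It has two substantive parts: (i) the SIP's NBBO is determined by a convention, namely arrival at $x_{\mathrm{SIP}}$ by coordinate time $t$; (ii) different conventions yield different NBBOs, and none is singled out by physics. My plan is to formalize a ``simultaneity convention'' as a choice of time function $\tau$ on spacetime, that is, a family of spacelike hypersurfaces $\Sigma_t = \{\tau = t\}$ foliating the region containing the exchange worldlines. Relative to a convention $\tau$, the NBBO at $t$ is computed from the latest quote at each $E_i$ whose event lies on or before $\Sigma_t$. The SIP's rule is then recovered as a specific $\tau$: assign to an emission event $e$ at $x_i$ the time $\tau_{\mathrm{SIP}}(e) = t_e + \delta_i$. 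This is the retarded, arrival-time foliation anchored at $x_{\mathrm{SIP}}$, and it is a legitimate convention because $\delta_i \geq |x_i - x_{\mathrm{SIP}}|/c$.

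The first step is to show that the SIP output is exactly the NBBO with respect to $\tau_{\mathrm{SIP}}$. This follows directly from the description in Section~\ref{sec:nbbo}: $b_i(t)$ is the last bid whose arrival time is at most $t$. So the SIP is evaluating Eq.~(\ref{eq:nbbo}) on the slices of $\tau_{\mathrm{SIP}}$ and not on any intrinsic ``present.''

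The second step is to show that different conventions give different NBBOs. Take the two spacelike-separated quotes $\alpha,\beta$ from Theorem~\ref{thm:main}, with $b_B > b_A$. By Eq.~(\ref{eq:lorentz}), there are inertial frames $S$ and $S'$ whose simultaneity slices order $\alpha,\beta$ oppositely. Theorem~\ref{thm:main} then gives a coordinate time at which the two NBBOs differ. I would also check that $\tau_{\mathrm{SIP}}$ generally disagrees with at least one of these inertial conventions. Because $\delta_A \neq \delta_B$ when the SIP is not equidistant from the two exchanges, a suitable choice of $t_\beta - t_\alpha$ inside the window of Eq.~(\ref{eq:spacelike}) makes the arrival order differ from the $S$-order. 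So the SIP's convention is one member of a family that visibly disagrees with others on the output.

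The third step, and the main obstacle, is ``no convention is privileged by physics.'' My approach is to use the fact that the causal precedence relation, as used by Lamport, is Lorentz-invariant and leaves $\alpha,\beta$ unordered. Any observable fact that could privilege a convention would have to be invariant under the Poincaré group. The relevant invariant data are exactly the causal order together with the quote values, and the two NBBO time-series above agree on all of that data while disagreeing on their output. Hence no invariant functional of the physical situation selects one of them. The subtlety is that ``not privileged by physics'' is informal. I would make it precise as the following claim: there is no Lorentz-covariant rule that maps the set of events to a single NBBO history and reproduces the SIP's answer in every frame. The non-uniqueness just established rules this out. I would also remark that $\tau_{\mathrm{SIP}}$ depends on the arbitrary position $x_{\mathrm{SIP}}$, since relocating the SIP changes the $\delta_i$ and can flip outputs, which underscores that it is a convention rather than a measurement of the market state.
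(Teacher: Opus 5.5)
\textbf{Overlap with the paper.} Your Steps 1--2 and your closing remark about relocating the SIP follow the paper's own, implicit, route. The paper reads the corollary off from three things: the observation in Section~\ref{sec:nbbo} that $b_i(t)$ is the last bid to \emph{arrive} at $x_{\mathrm{SIP}}$; Theorem~\ref{thm:main}; and the remark in Section~\ref{sec:lamport} that a SIP located elsewhere would produce a different NBBO. The paper never proves ``no convention is privileged.'' It asserts it as the relativity of simultaneity.

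\textbf{Gap in Step 3.} Step 3 tries to turn that assertion into a theorem, and it fails on three counts.

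(a) The physical data are not just the causal order and the quote values. The exchanges share a rest frame (a common 4-velocity $u$), and the SIP has its own worldline. A boost from $S$ to $S'$ is a symmetry of the laws but not of this configuration, and the hyperplanes orthogonal to $u$ are covariantly determined by the configuration. Malament's 1977 theorem shows that Einstein simultaneity relative to an inertial worldline is the only nontrivial simultaneity relation definable from the causal structure plus that worldline. So an invariance argument of your type cannot show that nothing is privileged.

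(b) Your precise claim is false as stated. The arrivals of $\alpha$ and $\beta$ both lie on the SIP's timelike worldline, so their order is the same in every inertial frame. The rule ``order by arrival on the SIP worldline'' is therefore itself Lorentz-covariant and reproduces the SIP's answer in every frame. If the rule is forbidden from seeing that worldline, the claim becomes true, but only because the output depends on $x_{\mathrm{SIP}}$. That shows the SIP's convention is not forced by the events; it does not show that no convention is privileged.

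(c) ``The non-uniqueness just established rules this out'' is a non sequitur. Several conventions with different outputs are perfectly compatible with one of them being canonically selected.

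What is provable, and what the paper actually leans on, is this: the Lorentz-invariant precedence relation (Proposition~\ref{prop:lamport}) leaves $\alpha,\beta$ unordered; inertial frames order them oppositely; and the SIP's order depends on $x_{\mathrm{SIP}}$ and the $\delta_i$. The stronger clause should be presented as interpretation, not derived.

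\textbf{Error in Step 1.} $\tau_{\mathrm{SIP}}$ is not a family of spacelike hypersurfaces, and $\delta_i \geq |x_i - x_{\mathrm{SIP}}|/c$ is not what would make it one. Its level sets are spacelike, and $\tau_{\mathrm{SIP}}$ increases along causal curves, only if $|\delta_i - \delta_j| < |x_i - x_j|/c$ for all $i \neq j$.

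For vacuum delays the triangle inequality gives only $\leq$. The level sets are then past light cones of SIP events, hence null rather than spacelike.

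For fiber delays ($n \approx 1.5$, Table~\ref{tab:latencies}) the condition fails outright. Put the SIP at Mahwah. A Secaucus quote at $t = 0$ causally precedes a Mahwah quote at $t = 150\,\mu$s, since $150 > 113$. Yet $\tau_{\mathrm{SIP}}$ assigns them $170$ and roughly $150$, reversing the causal order.

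You have two fixes. You can define a convention as an arbitrary assignment of time labels to quote events. That is all Steps 1--2 need, and it sharpens the point that the SIP measures no slice of the market. Alternatively, drop the claim that the SIP's rule belongs to the same family as the inertial frames.
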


\section{Connection to Distributed Systems Theory}
\label{sec:lamport}

Lamport's foundational result~\cite{lamport1978} establishes that in a
distributed system, the only physically meaningful ordering of events is
the \emph{happened-before} relation $\rightarrow$, defined by:

\begin{enumerate}
\item If $a$ and $b$ are events at the same process and $a$ occurs
before $b$, then $a \rightarrow b$.
\item If $a$ is the sending of a message and $b$ is its receipt, then
$a \rightarrow b$.
\item The relation is transitive.
\end{enumerate}

Events not related by $\rightarrow$ are \emph{concurrent}: they have no
physically meaningful ordering. Lamport noted explicitly that this
structure is isomorphic to the causal structure of special
relativity~\cite{lamport1978}. The happened-before relation is the
discrete analogue of the timelike/lightlike ordering in Minkowski
spacetime; concurrent events are the analogue of spacelike-separated
events.

\begin{proposition}
\label{prop:lamport}
Quote events at spacelike-separated exchanges that have not yet been
communicated to any common process are concurrent in Lamport's sense.
The NBBO requires ordering these concurrent events. No such ordering
can be derived from the happened-before relation.
\end{proposition}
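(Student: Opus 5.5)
We prove the three claims of Proposition~\ref{prop:lamport} in order, working throughout with Lamport's happened-before relation $\rightarrow$ as defined above. The exchanges $E_A, E_B$ and the SIP are modeled as distinct processes. Each quote event is an event at its exchange's process. Message sends and receipts are the only inter-process events.

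\textbf{Step 1 (concurrency).} Let $\alpha$ at $E_A$ and $\beta$ at $E_B$ be quote events. Suppose no message chain has carried information from one to the other, i.e.\ neither has yet been communicated to a process that subsequently influenced the other. We show $\alpha \not\rightarrow \beta$ and $\beta \not\rightarrow \alpha$. By the definition of $\rightarrow$ as the transitive closure of process order and send/receive pairs, $\alpha \rightarrow \beta$ would require a finite chain
\[
\alpha = e_0 \rightarrow e_1 \rightarrow \cdots \rightarrow e_k = \beta .
\]
Each link in this chain is either intra-process or a message. Since $\alpha$ and $\beta$ lie on different processes, at least one link is a message originating at or after $\alpha$ on $E_A$'s side of the chain. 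That message constitutes communication of $\alpha$ onward and reaches $E_B$'s process before $\beta$, contradicting the hypothesis. The case $\beta \rightarrow \alpha$ is symmetric.

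We also connect this to the physics. Any message travels at speed at most $c$, so such a chain would place $\beta$ in the causal future of $\alpha$. Spacelike separation, $|t_\alpha - t_\beta| < d/c$ from (\ref{eq:spacelike}), excludes this. This is where the hypothesis of the proposition enters, and it mirrors Theorem~\ref{thm:main}.

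\textbf{Step 2 (the NBBO requires ordering them).} From (\ref{eq:nbbo}), the value of $\mathrm{NBBO}(t)$ depends on which of $\alpha, \beta$ is counted as ``current.'' We reuse the construction in the proof of Theorem~\ref{thm:main}: take bids $b_A < b_B$, so the best bid differs according to whether $\beta$ is deemed to precede the evaluation instant while $\alpha$ is not, or conversely. Therefore any well-defined NBBO implicitly fixes a relative order, or at least a cut, on $\{\alpha,\beta\}$.

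\textbf{Step 3 (no derivation from $\rightarrow$).} This is the step we expect to be the main obstacle, because ``cannot be derived'' must be made precise. We formalize it as follows: an ordering is derived from $\rightarrow$ if it is invariant under every linear extension consistent with $\rightarrow$. Since $\alpha \parallel \beta$ by Step~1, the partial order admits linear extensions with $\alpha < \beta$ and with $\beta < \alpha$. The standard argument inserts the incomparable pair in either order, using Szpilrajn's extension theorem, and the finite case is routine. The NBBO value differs between these extensions by Step~2, so it is not a function of $\rightarrow$ alone. Any choice, including SIP arrival order, is extra data; this is exactly condition (ES2) of Definition~\ref{def:es} and echoes Corollary~\ref{cor:convention}.

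One subtlety must be handled: once both quotes reach the SIP, its receipt events become ordered. We address this by noting that the receipt order relates the receive events, not $\alpha$ and $\beta$ themselves, which remain concurrent.
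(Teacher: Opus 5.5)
Your proposal is correct and follows the same reasoning the paper relies on: concurrency is the absence of a happened-before chain (forced here by the light-cone bound), the bid example from Theorem~\ref{thm:main} shows the NBBO depends on how $\alpha$ and $\beta$ are ordered, and SIP arrival order is extra, non-unique data. The paper gives no separate proof beyond remarking that a differently located SIP or a different tie-breaking rule would produce a different NBBO; your linear-extension formalization of ``not derivable from $\rightarrow$'' is a more rigorous version of that same point.
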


The SIP resolves this by fiat: it orders concurrent events by arrival
time at a single location. This is a valid synchronization convention
but not a unique one. A SIP located elsewhere, or using a different
tie-breaking rule, would produce a different NBBO.

Google's Spanner database~\cite{corbett2012} confronted the same
impossibility. Its solution---TrueTime---represents time as an
\emph{interval of uncertainty} and waits out the uncertainty before
declaring a transaction committed. The financial system, by contrast,
pretends the uncertainty does not exist.

\section{Quantifying the Exploitation}
\label{sec:exploitation}

The frame-dependence of the NBBO creates an exploitable asymmetry.
Participants who receive exchange data via direct proprietary feeds
observe quote events on the order of tens of microseconds after
emission~\cite{ding2014}; participants relying on the consolidated SIP
observe the same events approximately
$1{,}128\;\mu\mathrm{s}$ after emission~\cite{bartlett2019}---a ratio
exceeding $50{:}1$.

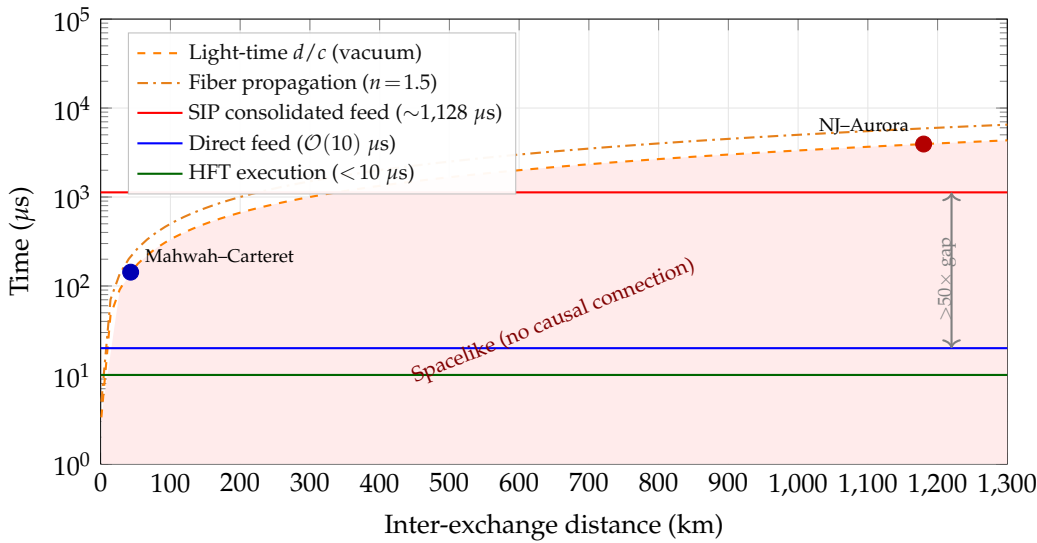
\begin{figure}[ht]
\centering
\begin{tikzpicture}
\begin{axis}[
    width=0.88\textwidth,
    height=7.5cm,
    ymode=log,
    ymin=1, ymax=100000,
    xmin=0, xmax=1300,
    xlabel={Inter-exchange distance (km)},
    ylabel={Time ($\mu$s)},
    grid=major,
    grid style={gray!20},
    legend style={at={(0.03,0.97)}, anchor=north west, font=\footnotesize,
                  draw=gray!50, fill=white, fill opacity=0.95,
                  cells={anchor=west}},
    ]

    \addplot[fill=red!8, draw=none, forget plot, domain=1:1300, samples=50]
      {x / 0.2998} \closedcycle;

    \addplot[domain=1:1300, samples=100, thick, orange, dashed]
      {x / 0.2998};  
    \addlegendentry{Light-time $d/c$ (vacuum)}

    \addplot[domain=1:1300, samples=100, thick, orange!60!brown,
             dash pattern=on 4pt off 2pt on 1pt off 2pt]
      {x / 0.1999};  
    \addlegendentry{Fiber propagation ($n \!=\! 1.5$)}

    \addplot[domain=0:1300, samples=2, thick, red] {1128};
    \addlegendentry{SIP consolidated feed (${\sim}1{,}128\;\mu$s)}

    \addplot[domain=0:1300, samples=2, thick, blue] {20};
    \addlegendentry{Direct feed ($\mathcal{O}(10)\;\mu$s)}

    \addplot[domain=0:1300, samples=2, thick, black!60!green] {10};
    \addlegendentry{HFT execution ($<\!10\;\mu$s)}

    \addplot[only marks, mark=*, mark size=3pt, blue!70!black]
      coordinates {(43, 143)};
    \node[font=\scriptsize, anchor=south west, xshift=2pt] at (axis cs:43,143)
      {Mahwah--Carteret};

    \addplot[only marks, mark=*, mark size=3pt, red!70!black]
      coordinates {(1180, 3940)};
    \node[font=\scriptsize, anchor=south east, xshift=-2pt] at (axis cs:1180,3940)
      {NJ--Aurora};

    \draw[<->, thick, gray] (axis cs:1220,20) -- (axis cs:1220,1128);
    \node[font=\scriptsize, text=gray, rotate=90, anchor=south]
      at (axis cs:1240,120) {$>$50$\times$ gap};

    \node[font=\footnotesize, text=red!50!black, rotate=22]
      at (axis cs:650,40)
      {Spacelike (no causal connection)};

\end{axis}
\end{tikzpicture}
\caption{Timescales in the U.S.\ exchange network. The orange dashed
line marks the light-time separation $d/c$: events closer in time than
this are spacelike-separated. HFT execution times ($<$10\,$\mu$s) and
direct feed latencies ($\mathcal{O}(10)\,\mu$s) operate far below the
light-time for the NJ--Chicago corridor (3,940\,$\mu$s), while even
within the NJ cluster (143\,$\mu$s) they approach the boundary. The
$>$50$\times$ gap between direct feeds and the SIP consolidated feed
(right edge annotation) defines the exploitable asymmetry.}
\label{fig:latency}
\end{figure}

This is not a speed advantage in the conventional sense. It is a
\emph{frame advantage}: direct-feed participants effectively inhabit a
reference frame closer to the exchange's local frame, while SIP
participants inhabit a frame displaced by the consolidation delay. The
two frames assign different temporal orderings to recent quote events,
and therefore compute different NBBOs.

Aquilina, Budish, and O'Neill~\cite{aquilina2022} measure the
consequences directly. Using exchange message data from the London Stock
Exchange, they identify latency arbitrage ``races''---episodes in which
the same information triggers competing orders at different
venues---occurring approximately once per minute per liquid security.
The modal race duration is 5--10\,$\mu$s. Annual global extraction:
approximately \$5~billion. A subsequent study~\cite{aquilina2023}
documents that high-frequency traders win 96--99\% of races involving
stale quotes.

Budish et al.~\cite{budish2015} characterize this extraction as
socially wasteful: it represents private profit from temporal ambiguity,
not compensation for liquidity provision or price discovery.

\subsection{The Magnitude Counterargument}

Bartlett and McCrary~\cite{bartlett2019} report that the SIP-reported
NBBO matches the ``true'' NBBO 97\% of the time, with aggregate gains
of only \$14.4~million on \$3.7~trillion in trading volume. Two
responses are warranted.

First, Bartlett and McCrary's ``true'' NBBO is itself computed using
direct-feed data at a single location---another simultaneity convention.
They have not measured the gap between convention and reality; they have
measured the gap between two conventions.

Second, the argument from magnitude misidentifies the claim. The
impossibility result (Theorem~\ref{thm:main}) does not depend on how
often the NBBO is ``wrong'' or by how much. It depends on whether the
NBBO is \emph{well-defined}. A thermometer that reads correctly 97\% of
the time but is calibrated to a temperature scale based on phlogiston
theory is not approximately right. It is conceptually incoherent. The
3\% where it fails is not noise; it is where the incoherence becomes
visible, and---per Aquilina et al.---where \$5~billion per year is
extracted.

\section{Discussion}
\label{sec:discussion}

Engineered simultaneity is a \emph{category mistake} in the sense of
Ryle~\cite{ryle1949}: it applies a concept (absolute simultaneity) to a
domain (spacelike-separated events) where the concept has no meaning. The
mistake is not that the SIP computes something---it does. The mistake is
that the result is presented as ``the best price'' rather than as ``the
best price according to one particular, non-unique simultaneity
convention.''

The situation is precisely analogous to the pre-relativistic assumption
of absolute time. Newtonian mechanics assumes a universal ``now'' that
all observers share. Special relativity showed this to be a convention,
not a fact. Financial regulation has not yet absorbed this lesson.

Lewis~\cite{lewis2014} popularized the idea that markets were ``rigged''
by speed differentials and could be fixed by leveling the temporal playing
field. The implicit assumption was that there exists a meaningful notion
of a simultaneous market state that everyone could, in principle, reach.
That assumption is incorrect. \textit{Flash Boys} correctly identified
latency arbitrage but misdiagnosed its root cause. The problem is not
insufficient synchronization. The problem is the belief that
synchronization can create simultaneity in the first place.

Four structural observations follow:

\textit{First}, the impossibility is exact. It does not depend on
clock accuracy, synchronization protocols, or future technological
improvements. No technology can make spacelike-separated events timelike.
The speed of light is not an engineering constraint; it is a law.

\textit{Second}, the NBBO is not a measurement but a \emph{declaration}.
Timestamps assigned by the SIP do not record when events happened in any
frame-independent sense. They record when events were registered under a
particular convention. Any system that treats these timestamps as
objective measurements of temporal order is committing a category
mistake.

\textit{Third}, the impossibility is not academic. It has measurable
economic consequences. Participants with preferential access to the
convention's inputs---direct-feed subscribers physically co-located
with exchange matching engines---extract approximately \$5~billion per
year from participants who rely on the convention's output.

\textit{Fourth}, the impossibility extends beyond special relativity.
General relativity further undermines the premise: clocks at different
gravitational potentials tick at different rates---an effect actively
corrected in GPS~\cite{ashby2003}. HFT infrastructure spans buildings,
elevations, thermal gradients, and mechanical environments. Nanosecond
agreement is already a negotiated approximation layered atop unprovable
assumptions about symmetry and stability~\cite{rindler2006}. The idea
of a single, authoritative market clock is physically indefensible.

\subsection{Implications for Time Infrastructure}

The analysis presented here was delivered as an invited talk at the
Open Compute Project Time Appliances Project (OCP-TAP) on January~28,
2026~\cite{borrill2026ocp}. OCP-TAP builds precision time infrastructure
for data centers and distributed systems. The relevance is direct:
time appliances measure, but they do not define reality. Synchronization
has limits that must be made explicit. Infrastructure should expose
uncertainty, not hide it~\cite{borrill2026ocpslides}. A companion
paper~\cite{borrill2026fiction} traces the fiction of synchronized time
from its cultural origins through daylight saving time and leap seconds
to its consequences in distributed computing and financial markets.

\subsection{Beyond Relativity: Indefinite Causal Order}

The impossibility argument can be strengthened further. In quantum
mechanics, time is not an observable---there is no time operator.
Time is a parameter external to the system, inferred from interaction
rather than observed as a conserved quantity~\cite{pauli1958}.
Protocols that treat timestamps as ground truth rather than metadata
are therefore already on shaky physical footing.

Modern quantum information theory goes further. Work by Oreshkov,
Costa, and Brukner~\cite{oreshkov2012} shows that some processes have
no definite causal order at all. Event $A$ is not before or after event
$B$ in any global sense. This is not philosophy; it is experimentally
realizable physics. As markets become faster, more reflexive, and more
tightly coupled, their behavior increasingly resembles systems where
causal order is local and observer-dependent rather than globally
defined.

\subsection{The Structure of Exploitation}

A system is functionally fraudulent when it claims physical
legitimacy it cannot have, exploits asymmetries invisible to most
participants, and converts uncertainty into private profit. Engineered
simultaneity satisfies all three conditions. The HFT industry
understands that simultaneity is not physically real, yet markets the
appearance of it as neutral, objective, and fair. The real advantage
is not speed---speed differences are visible and competitive. The
real advantage is control over the \emph{definition} of ``now'':
control over where time is measured, when clocks are sampled, and whose
clock counts. This is not about speed. It is about
time-definition control, and most market participants cannot audit it.

We note that Wissner-Gross and Freer~\cite{wissnergross2010} identified
the relevance of relativistic constraints to financial markets in 2010,
Angel~\cite{angel2014} argued for regulatory engagement with
relativistic constraints, Laughlin, Aguirre, and
Grundfest~\cite{laughlin2014} documented the 3\,ms latency reduction
on the Chicago--New York corridor driven by microwave infrastructure,
and Buchanan~\cite{buchanan2015} extended the analysis. The present work differs in formalizing the impossibility
as a theorem about the causal structure of the exchange network, rather
than as a constraint on arbitrage strategies, and in introducing the
concept of engineered simultaneity as a precise descriptor for the
mechanism.

\section{Conclusion}

The NBBO assumes a spacelike hypersurface of simultaneity that special
relativity proves cannot exist. General relativity compounds the problem
through gravitational time dilation, and quantum information theory
further undermines the premise through indefinite causal order. We have
formalized the special-relativistic impossibility as
Theorem~\ref{thm:main}, connected it to Lamport's impossibility of
distributed ordering, and shown that the gap between the NBBO's
implicit simultaneity convention and physical reality is exploited for
approximately \$5~billion annually.

The concept of \emph{engineered simultaneity}---the systematic
construction and deployment of systems that assume, implement, and
represent impossible temporal comparisons as objective---applies beyond
financial markets to any distributed system that treats timestamps as
physical measurements of ordering. We submit that recognizing this
category mistake is a prerequisite for principled system design.

A companion paper develops the broader theoretical framework---the
Forward-In-Time-Only (FITO) category mistake in distributed
systems~\cite{borrill2026fito}---while a second traces why
synchronized time is itself a fiction, from its cultural origins
through daylight saving time and leap seconds to its consequences
in financial markets~\cite{borrill2026fiction}.


\end{document}